\documentclass{ifacconf}
\usepackage{graphicx}      
\usepackage{natbib}        
\usepackage{amsmath,amssymb}
\usepackage{mathtools}
\usepackage{color}
\usepackage{subcaption}
\usepackage{tikz}
\usetikzlibrary{calc, positioning}

\usepackage{algorithm}
\usepackage{algpseudocode}

\makeatletter
\let\oldpf\pf
\let\endoldpf\endpf
\renewenvironment{pf}[1][\proofname]{\oldpf}{\hfill$\square$\endoldpf}
\makeatother

\DeclareMathOperator{\rank}{rank}

\DeclareMathOperator{\diag}{diag}

\DeclareMathOperator{\SO}{SO}

\begin{document}
\begin{frontmatter}


\title{Distributed Non-Uniform Scaling Control of Multi-Agent Formation with Dynamic Agent Joining}

\thanks[footnoteinfo]{This work was supported in part by the National Key Research and Development Program of China under Grant 2025YFA1018800 and in part by the National Natural Science Foundation of China under Grant 62573068 and Grant 62533002. {\it (Corresponding author: Gangshan Jing)}}

\author{Tao He, Gangshan Jing} 

\address{School of Automation, Chongqing University, Chongqing, China (e-mail: 20231301010@stu.cqu.edu.cn; jinggangshan@cqu.edu.cn).}

\begin{abstract}                

Non-uniform scaling control of formation enables multi-agent systems to adjust their shape by scaling with different ratios along different coordinate axes, offering enhanced flexibility in complex environments. However, like most existing formation maneuver strategies, it typically assumes a fixed set of agents, limiting its applicability in scenarios requiring dynamic team expansion. This paper introduces a distributed control framework that enables a formation to incorporate new agents during non-uniform scaling maneuvers in arbitrary dimensions while preserving the spectral properties of the graph Laplacian. Simulation examples validate the effectiveness of the theoretical results.
\end{abstract}

\begin{keyword}
Cooperative control, Formation control, Open multi-agent systems.
\end{keyword}

\end{frontmatter}

\section{Introduction}




Distributed formation maneuver control has become a foundational capability for multi-agent systems operating in complex environments. It enables a group of agents to function as a cohesive unit by coordinating a subset of them to guide the shape transformation of the entire formation using local sensing. This technology demonstrates significant potential in a variety of applications, ranging from search and exploration to cooperative manipulation (\cite{KAMEL2020128}). The realization of these advanced capabilities relies on effective control strategies, which primarily fall into two paradigms. Methods based on nonlinear constraints, such as distances (\cite{Yu2006,Trinh2024}) or angles (\cite{Jing2019,Buckley2021,Chen2021,Huang2026}), ensure rigid or uniform scaling geometry but often result in complex nonlinear controllers with challenges in stability analysis. Conversely, linear methods founded on the graph Laplacian encode the target formation in the matrix null space, yielding linear controllers and facilitating global convergence analysis.

Our work focuses on this linear Laplacian-based paradigm. While these methods have incorporated scaling maneuvers, the majority, such as those in (\cite{Lin2014, Li2018,Hector2021, Fang2024}), are confined to uniform scaling. This proves inefficient in constrained environments like narrow corridors, where uniform scaling forces unnecessary compression along all axes, whereas non-uniform scaling enables selective adaptation to spatial constraints. Affine formation (\cite{Lin2016, Zhao2018}) and the matrix-valued constraint framework proposed in our prior work (\cite{he2025}) both support non-uniform scaling of the formation geometry. However, the former typically relies on denser sensing graphs and a larger set of leaders, whereas the latter achieves this scaling flexibility with minimalist structural requirements.

Despite this advantage, the framework in \cite{he2025} is inherently restricted to a fixed set of agents, precluding its application to \textit{open multi-agent systems} (\cite{Franceschelli2021}) where agents may dynamically join or leave. This limitation stems from a fundamental challenge: unlike methods based on nonlinear constraints, Laplacian-based approaches do not naturally support network expansion, as any naive local modifications to the graph can alter the global spectral properties of the Laplacian matrix. This paper directly addresses this limitation by extending the framework in \cite{he2025} to support dynamic network expansion. The main contributions of this work are summarized as follows.

\begin{enumerate}
\item A theoretical framework for dynamic Laplacian construction that preserves the positive semi-definiteness and null space properties when new agents join the formation (see Theorem~\ref{thm:1}). 

\item A distributed agent joining protocol (see Algorithm~\ref{alg:agent_joining}) that realizes dynamic network expansion through local operations, eliminating the need for centralized computation as required in \cite{Lin2014,Zhao2018,he2025}.
\end{enumerate}



\section{Preliminaries and Problem Statement}


\subsection{Notation and Graph Theory}
The identity matrix is \( I_d \in \mathbb{R}^{d \times d} \), the all-ones vector is \( 1_d \in \mathbb{R}^d \), the zero tensor (scalar/vector/matrix) with context-appropriate dimensions is \( 0 \), the Kronecker product is \( \otimes \), and the symbols \( \succeq \) and \( \succ \) denote positive semi-definiteness and positive definiteness for matrices, respectively. The Special Orthogonal group is $\SO(d) = \{ R \in \mathbb{R}^{d \times d} \mid R^\top R = I_d, \det(R) = 1 \}$. For $x \in \mathbb{R}^d$, $\diag(x)$ is the $d \times d$ diagonal matrix with the components of $x$ on its main diagonal. Let $\|\cdot\|$ be the Euclidean norm.

Let a bidirectional graph $G = (V, E)$ model a multi-agent system, where the vertex set $V = \{1, \ldots, n\}$ represents the agents and the edge set $E \subseteq \{(i, j) : i, j \in V, i \neq j\}$ captures their sensing interactions. The graph is bidirectional, meaning that if $(i, j) \in E$, then $(j, i) \in E$. A directed edge $(j, i) \in E$ indicates that agent $i$ can sense agent $j$. The neighbor set of an agent $i$ is defined as $\mathcal{N}_i = \{j \in V : (j,i) \in E\}$.


The matrix-valued Laplacian \(L = [L_{ij}] \in \mathbb{R}^{dn \times dn}\) is a block matrix where each block \(L_{ij} \in \mathbb{R}^{d \times d}\) represents the \emph{matrix-valued edge weight} corresponding to the directed edge \((j, i) \in E\). The Laplacian satisfies the zero-row-sum property: \(\sum_{j=1}^n L_{ij} = 0\) for all \(i = 1, \dots, n\).

\begin{figure}[t]
	\centering
	\begin{tikzpicture}[scale=0.7]
	
	\begin{scope}

	\coordinate (A1) at (0,0);
	\coordinate (A2) at (1,0);
	\coordinate (A3) at (1,1);
	\coordinate (A4) at (0,1);
	
	\draw[dashed, thick] (A1) -- (A2) -- (A3) -- (A4) -- cycle;
	
	\foreach \point in {A1,A2,A3,A4} {
		\draw[fill=black] (\point) circle (0.06);
	}
	
	\draw[->, thick, olive, dashed, bend left=40] (1,1) to (4.1,1.5);
	
	\draw[fill=gray!30, line width=1pt] (3.5,0) -- (3.5,1.5) -- (3.7,1.5) -- (3.7,0) -- cycle;
	\draw[fill=gray!30, line width=1pt] (4.3,1) -- (4.3,2.5) -- (4.5,2.5) -- (4.5,1) -- cycle;
	\node[right] at (3,2.7) {Narrow Passage};

	\coordinate (C1) at (3.8,0.5);
	\coordinate (C2) at (4.2,0.5);
	\coordinate (C3) at (4.2,1.5);
	\coordinate (C4) at (3.8,1.5);
	
	\draw[dashed, thick, olive] (C1) -- (C2) -- (C3) -- (C4) -- cycle;
	
	\foreach \point in {C1,C2,C3,C4} {
		\draw[fill=olive] (\point) circle (0.06);
	}
	
	\draw[fill=yellow!30] (6.5+1,0.2) -- (8.5+1,0.2) -- (8.5+1,1.5) -- (6.5+1,1.5) -- cycle;
	\node[below] at (8,0) {Restricted Zone};
	
	\draw[->, thick, blue, dashed, bend right=15] (4.2,0.2) to (6+1,0.5);
	
	\coordinate (B1) at (6.1+1,0);
	\coordinate (B2) at (7.1+1,0);
	\coordinate (B3) at (7.1+1,1.7);
	\coordinate (B4) at (6.1+1,1.7);
	
	\draw[dashed, thick, blue] (B1) -- (B2) -- (B3) -- (B4) -- cycle;
	
	\foreach \point in {B1,B2,B3,B4} {
		\draw[fill=blue] (\point) circle (0.06);
	}
	
	\draw[->, thick] (0,0) -- (9.8,0) node[right] {$x$};
	\draw[->, thick] (0,0) -- (0,2.5) node[above] {$y$};
	
	\node[below] at (0.5,0) {Original};
	\node[below, olive] at (2.5,2.5) {(1) Scale $x$};
	\node[above, blue] at (5.5,0.3) {(2) Scale $y$};
	\end{scope}
	
	\node[above] at (4.5,-1.5) {(a) $R = I_2$};
		
	\begin{scope}[shift={(0,-5)}]
	
	\draw[->, thick] (0,0) -- (3,3) node[right] {$y'$};
	\draw[->, thick] (0,0) -- (1,-1) node[right] {$x'$};
	
	\coordinate (A1) at (0,0);
	\coordinate (A2) at (1,0);
	\coordinate (A3) at (1,1);
	\coordinate (A4) at (0,1);
	
	\draw[dashed, thick] (A1) -- (A2) -- (A3) -- (A4) -- cycle;
	
	\foreach \point in {A1,A2,A3,A4} {
		\draw[fill=black] (\point) circle (0.06);
	}
	
	\draw[fill=gray!30, line width=1pt] (2.5,1.0) -- (3.5,2.0) -- (3.7,1.8) -- (2.7,0.8) -- cycle;
	\draw[fill=gray!30, line width=1pt] (3.8,0.5) -- (4.8,1.5) -- (5.0,1.3) -- (4.0,0.3) -- cycle;

	\coordinate (D1) at (0.6+3.0,1.0); 
	\coordinate (D2) at (0.6+3.75,1.25);
	\coordinate (D3) at (0.6+4.0,2.0);
	\coordinate (D4) at (0.6+3.25,1.75);
	
	\draw[dashed, thick, orange] (D1) -- (D2) -- (D3) -- (D4) -- cycle;
	
	\foreach \point in {D1,D2,D3,D4} {
		\draw[fill=orange] (\point) circle (0.06);
	}
	
	\draw[->, thick, orange, dashed, bend right=20] (1,0.5) to (3.5,0.9);
	
	\draw[fill=gray!30, line width=1pt] (5.0+2,1.5) -- (6.0+2,0.5) -- (6.2+2,0.7) -- (5.2+2,1.7) -- cycle;
	\draw[fill=gray!30, line width=1pt] (5.3+3,2) -- (6.3+3,1) -- (6.5+3,1.2) -- (5.5+3,2.2) -- cycle;
	\node[below] at (6.1,0.5) {Narrow Passage};
	
	\coordinate (E1) at (2.6+5.5,1.5-0.5);   
	\coordinate (E2) at (2.6+6.25,1.25-0.5); 
	\coordinate (E3) at (2.6+6.0,2.0-0.5);   
	\coordinate (E4) at (2.6+5.25,2.25-0.5); 
	
	\draw[dashed, thick, purple] (E1) -- (E2) -- (E3) -- (E4) -- cycle;
	
	\foreach \point in {E1,E2,E3,E4} {
		\draw[fill=purple] (\point) circle (0.06);
	}
	
	\draw[->, thick, purple, dashed, bend left=30] (4.75,1.75) to (7.8,2);
	
	\node[above] at (0.5,1.2) {Original};
	\node[below, orange] at (2.5,0.3) {(3) Scale $x'$};
	\node[below, purple] at (6.5,3) {(4) Scale $y'$};
	\node[above] at (4.5,-2.5) {(b) $R = \begin{bmatrix} \cos 45^\circ & \sin 45^\circ \\ -\sin 45^\circ & \cos 45^\circ \end{bmatrix}$};
	\end{scope}
	
	\end{tikzpicture}
	\caption{Non-uniform scaling transformations for obstacle avoidance in 2D. The formation adaptively scales under different coordinate systems: (a) With $R = I_2$, scaling along original axes; (b) With rotation matrix $R$, scaling along rotated axes $x'$ and $y'$.}
	\label{fig:scaling_2d_rotated_corrected}
\end{figure}

\subsection{Non-Uniform Scaling Formation Control}

Formation maneuver control under non-uniform scaling generalizes the concept of uniform scaling by allowing independent compression or stretching along different orthogonal directions specified by the orthogonal matrix $R \in SO(d)$, as illustrated in Fig.~\ref{fig:scaling_2d_rotated_corrected}. To formally describe such capability, consider a formation in $\mathbb{R}^d$ represented by $(G, p, R)$, where $G$ is the sensing graph, and the configuration $(p, R)$ consists of the stacked agent positions $p = [p_1^\top, \dots, p_n^\top]^\top \in \mathbb{R}^{dn}$ and the orthogonal matrix $R \in SO(d)$. Let $(\tilde{p}, R)$ denote a nominal configuration, where $\tilde{p} \in \mathbb{R}^{dn}$ is the nominal position vector that defines the reference shape of the formation. The desired shape space under non-uniform scaling is then defined as:
\begin{equation} \label{eq_dss}
    \varPi(\tilde p) \coloneqq \left\{ p = \left(I_n \otimes S(s, R)  \right) \tilde{p} + 1_n \otimes \tau \mid s, \tau \in \mathbb{R}^d \right\},
\end{equation}
where $S(s, R)=R \diag(s) R^{\top}$, $s$ and $\tau$ are time-varying scaling and translation maneuver parameters, respectively. 

A key challenge lies in the distributed implementation, as $\varPi$ depends on global parameters $(s, \tau)$ that are impractical to coordinate explicitly across all agents in a large-scale network. To address this, \cite{he2025} introduced a matrix-valued Laplacian $L \in \mathbb{R}^{dn \times dn}$ with $\ker(L) = \varPi$, and employed a leader-follower strategy to actively maneuver the formation within $\varPi$. Leaders ($V_l = \{1,\dots,m\}$) drive the formation, while followers ($V_f = \{m+1,\dots,n\}$) maintain the formation through constraints. Accordingly, by partitioning $p = [p_l^\top, p_f^\top]^\top$ and $
L = [\begin{smallmatrix} L_{ll} & L_{lf} \\ L_{fl} & L_{ff} \end{smallmatrix}]
$. Since $\ker(L) = \varPi$, any $p \in \varPi$ satisfies $L p = 0$. If the follower block $L_{ff}$ is non-singular, $p_f = -L_{ff}^{-1} L_{fl} p_l$. However, \cite{he2025} assumes a fixed sensing graph.

\subsection{Problem Statement}

This paper aims to develop a distributed formation control framework for open multi-agent systems using a Laplacian-based paradigm, which supports non-uniform scaling maneuvers while accommodating dynamic topological changes. The evolution of the system is modeled as a discrete sequence of nominal formations $F^k = (G^k, \tilde{p}^k, R)$ with associated Laplacians $L^k$, where the index $k \in \mathbb{N}$ labels the  formation after the $k$-th topology change.

The transition from $F^k$ to $F^{k+1}$, induced by arbitrary agent joining/leaving or edge addition/removal, yields the updated Laplacian
\begin{equation}\label{eq:L_update}
L^{k+1} = \mathcal{E}^k \bigl(L^k + \Delta^k\bigr) + \bar{\Delta}^k,
\end{equation}
where $\mathcal{E}^k: \mathbb{R}^{dn^k \times dn^k} \to \mathbb{R}^{dn^{k+1} \times dn^{k+1}}$ is the dimension-adjustment operator, to be specified later for agent joining/leaving, while $\Delta^k$ and $\bar{\Delta}^k$ encode local edge weight modifications applied before and after $\mathcal{E}^k$, respectively.

The convergence of the subsequent formation controller under the leader-follower strategy requires 
\begin{equation}\label{eq_spectral_properties}
L^k \succeq 0, \quad \ker(L^k) = \varPi(\tilde{p}^k), \quad L^k_{ff} \succ 0.
\end{equation}
To preserve this property after arbitrary topology changes, we exploit the closure of the positive semi-definite matrix cone under addition. Accordingly, we construct $\Delta^k$, and $\bar{\Delta}^k$ as symmetric positive semi-definite matrices.

Due to space constraints, this paper focuses on agent joining, formally defined as follows:

\begin{prob}[Agent Joining]
\label{pbm:agent_joining}
Given a nominal formation \( F^k = (G^k, \tilde{p}^k, R) \) in \( \mathbb{R}^d \) with Laplacian \( L^k \) satisfying \eqref{eq_spectral_properties}, design \( \mathcal{E}^k \), \( \Delta^k \), and \( \bar{\Delta}^k \) in \eqref{eq:L_update} for the joining of a new agent \( v^k \) with nominal position \( \tilde{p}_{v^k} \) using minimal neighbor connections, such that the updated Laplacian \( L^{k+1} \) satisfies \eqref{eq_spectral_properties} using only locally available information.
\end{prob}





To maintain notational conciseness, we omit the superscript $k$ in subsequent sections. For example, $L^+$ denotes $L^{k+1}$, $L$ denotes $L^k$, and $v$ denotes $v^k$.

\section{Distributed Framework for Dynamic Agent Joining} \label{sec:dynamic_agent_joining}
In this section, we first introduce the foundational matrix-valued constraints, then detail the positive semi-definite Laplacian construction, and finally integrate these components into a fully distributed algorithm.
\subsection{Matrix-Valued Constraints} \label{subsec:constraints}
To address problem \ref{pbm:agent_joining}, we propose a method for constructing a positive
semi-definite Laplacian based on matrix-valued constraints from \cite{he2025}. For a triplet of agents \((i,j,k)\), the constraint is defined as:
\begin{equation}
W_{jk} p_{ik}+W_{ki} p_{jk}=0,
\label{equ_vvc}
\end{equation}
where $p_{ik}=p_i-p_k$, $p_{jk}=p_j-p_k$, $W_{jk} = \diag\left(\tilde{p}_{jk,R}\right) R^{\top}$, $W_{ki} = \diag\left(\tilde{p}_{ki,R}\right) R^{\top}$, $\tilde{p}_{ki,R}=R^{\top} \tilde{p}_{ki}$, $\tilde{p}_{ki} = \tilde{p}_k - \tilde{p}_i$, $\tilde{p}_{jk,R}=R^{\top} \tilde{p}_{jk}$, $\tilde{p}_{jk} = \tilde{p}_j - \tilde{p}_k$.

\begin{lem}[\cite{he2025}] \label{lem:2}
	The constraint (\ref{equ_vvc}) is invariant to translation and non-uniform scaling transformation of $p_i, p_j, p_k$.
\end{lem}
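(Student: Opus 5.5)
We need to show that if $p = (I_n\otimes S(s,R))\tilde p + 1_n\otimes\tau$, then $p$ satisfies \eqref{equ_vvc}. More precisely, we should interpret invariance as follows: $\tilde p$ satisfies \eqref{equ_vvc}, and any translated and non-uniformly scaled copy of $\tilde p$ satisfies it too. The plan has two steps: first verify that the nominal configuration itself satisfies the constraint, then show that the transformation preserves it.

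\textbf{Translation.} Since $p_{ik}$ and $p_{jk}$ are differences, adding $\tau$ to every position leaves them unchanged. It therefore suffices to treat $p = (I_n\otimes S)\tilde p$, for which $p_{ik} = S\tilde p_{ik}$ and $p_{jk} = S\tilde p_{jk}$ with $S = R\diag(s)R^\top$.

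\textbf{Scaling.} Substituting into the constraint gives
\begin{equation*}
W_{jk}S\tilde p_{ik} = \diag(\tilde p_{jk,R})R^\top R\diag(s)R^\top\tilde p_{ik} = \diag(\tilde p_{jk,R})\diag(s)\tilde p_{ik,R},
\end{equation*}
where $\tilde p_{ik,R} = R^\top\tilde p_{ik}$. Diagonal matrices commute, and $\diag(a)b = \diag(b)a$, i.e.\ $a\odot b = b\odot a$ componentwise. Hence this term equals $\diag(s)(\tilde p_{jk,R}\odot\tilde p_{ik,R})$. In the same way, the second term equals $\diag(s)(\tilde p_{ki,R}\odot\tilde p_{jk,R})$.

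The sum of the two terms is therefore
\begin{equation*}
\diag(s)\,\tilde p_{jk,R}\odot(\tilde p_{ik,R}+\tilde p_{ki,R}).
\end{equation*}
Since $\tilde p_{ki} = -\tilde p_{ik}$, the factor in parentheses vanishes, so the sum is zero. Taking $s = 1_d$ gives the unscaled case, which establishes that $\tilde p$ itself satisfies the constraint; the same computation then shows it holds for every $s,\tau$.

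The only delicate point is bookkeeping of the sign convention $\tilde p_{ki} = \tilde p_k - \tilde p_i$ versus $p_{ik} = p_i - p_k$. It is exactly this antisymmetry that produces the cancellation, so I will check it carefully. I will also note that the argument does not require the entries of $s$ to be nonzero.
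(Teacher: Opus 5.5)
Your verification is correct. Translations cancel in the differences, and with $S=R\diag(s)R^\top$ the two terms combine to $\diag(s)\,\tilde p_{jk,R}\odot(\tilde p_{ik,R}+\tilde p_{ki,R})=0$. The paper does not prove this lemma itself; it imports it from \cite{he2025}. The version you establish is that every translated and non-uniformly scaled copy of $\tilde p$ satisfies the triplet constraint. That is exactly what the proof of Theorem~\ref{thm:1} invokes to obtain $\varPi(\tilde p^+)\subseteq\ker(L^+)$.

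You deliberately narrowed the meaning of \emph{invariant}. It can also be read as saying that the solution set of \eqref{equ_vvc} is closed under the transformation: if arbitrary $p_i,p_j,p_k$ satisfy the constraint, so do $Sp_i+\tau$, $Sp_j+\tau$, $Sp_k+\tau$. Your computation gives this too if you stop one step before substituting the vectors. Using $R^\top R=I_d$ and commuting diagonal matrices, $W_{jk}S=\diag(\tilde p_{jk,R})\diag(s)R^\top=\diag(s)W_{jk}$, and likewise $W_{ki}S=\diag(s)W_{ki}$. Hence the residual at the transformed configuration equals $\diag(s)$ times the residual at the original one, for any $p$. The antisymmetric cancellation is then needed only once, to check that $\tilde p$ itself is a solution.

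Under Assumption~\ref{ass:c} the two readings coincide anyway. The linear map $(s,\tau)\mapsto(S\tilde p_i+\tau,\,S\tilde p_j+\tau,\,S\tilde p_k+\tau)$ is injective because every entry of $\tilde p_{ji,R}$ is nonzero, so the orbit of $\tilde p$ is a $2d$-dimensional subspace. The constraint has rank $d$ because $W_{kk}$ is invertible, so its solution set is also $2d$-dimensional. Since the orbit is contained in the solution set, the two are equal.
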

Equation (\ref{equ_vvc}) can be rewritten as
\begin{equation}
    W_{kk}p_k=W_{jk} p_i + W_{ki} p_j, 
    \label{equ_vvc1}
\end{equation}
where $W_{kk} = W_{jk} + W_{ki}$. If $W_{kk}$ is non-singular, then the position of agent $k$ is uniquely determined by the positions of agents $i$ and $j$, i.e., $p_k = W_{kk}^{-1} (W_{jk} p_i + W_{ki} p_j)$. A key question is under what conditions $\rank(W_{kk}) = d$ holds.

\begin{lem} \label{lem:1}
   In~\eqref{equ_vvc1}, $p_k$ can be uniquely determined by $p_i$ and $p_j$ $\big($ i.e.,  $\rank(W_{kk})=d \big)$ if and only if $\prod_{l =1 }^d \tilde{p}^l_{ji, R}  \ne 0$.
\end{lem}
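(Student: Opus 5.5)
The plan is to compute $W_{kk}$ explicitly and factor it into a diagonal matrix times an orthogonal matrix. By definition,
\begin{equation*}
W_{kk} = W_{jk} + W_{ki} = \diag\left(\tilde{p}_{jk,R}\right) R^{\top} + \diag\left(\tilde{p}_{ki,R}\right) R^{\top} = \diag\left(\tilde{p}_{jk,R} + \tilde{p}_{ki,R}\right) R^{\top}.
\end{equation*}
This uses two facts: $\diag(\cdot)$ is linear in its argument, and both summands share the common right factor $R^{\top}$.

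Next I simplify the argument of the diagonal matrix. By the definitions in Section~\ref{subsec:constraints}, linearity of multiplication by $R^{\top}$ gives
\begin{equation*}
\tilde{p}_{jk,R} + \tilde{p}_{ki,R} = R^{\top}\left(\tilde{p}_{jk} + \tilde{p}_{ki}\right) = R^{\top}\left(\tilde{p}_j - \tilde{p}_k + \tilde{p}_k - \tilde{p}_i\right) = R^{\top}\tilde{p}_{ji} = \tilde{p}_{ji,R},
\end{equation*}
where $\tilde{p}_{ji} = \tilde{p}_j - \tilde{p}_i$ and $\tilde{p}_{ji,R} = R^{\top}\tilde{p}_{ji}$, following the same notational convention as the paper. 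Hence $W_{kk} = \diag(\tilde{p}_{ji,R}) R^{\top}$.

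Since $R \in \SO(d)$, $R^{\top}$ is nonsingular, and multiplying by it does not change rank. Therefore
\begin{equation*}
\rank(W_{kk}) = \rank\left(\diag(\tilde{p}_{ji,R})\right),
\end{equation*}
which equals the number of nonzero components of $\tilde{p}_{ji,R}$. So $\rank(W_{kk}) = d$ holds if and only if every component $\tilde{p}^l_{ji,R}$, for $l = 1, \dots, d$, is nonzero, which is exactly the condition $\prod_{l=1}^d \tilde{p}^l_{ji,R} \neq 0$. Equivalently, $\det W_{kk} = \det(R)\prod_{l}\tilde{p}^l_{ji,R} = \prod_l \tilde{p}^l_{ji,R}$, since $\det(R)=1$.

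It remains to tie rank to uniqueness of $p_k$. If $\rank(W_{kk}) = d$, then $p_k = W_{kk}^{-1}(W_{jk}p_i + W_{ki}p_j)$ is the unique solution of \eqref{equ_vvc1}. If instead $\rank(W_{kk}) < d$, pick any nonzero $z \in \ker(W_{kk})$. Whenever a solution $p_k$ exists, $p_k + z$ is also a solution, so $p_k$ is not uniquely determined. This gives the equivalence stated in Lemma~\ref{lem:1}.

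There is no real obstacle here. The only step needing care is the telescoping identity $\tilde{p}_{jk} + \tilde{p}_{ki} = \tilde{p}_{ji}$, and checking that the sign conventions in the definitions of $\tilde{p}_{jk}$ and $\tilde{p}_{ki}$ indeed produce $\tilde{p}_j - \tilde{p}_i$.
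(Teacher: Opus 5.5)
Your proof is correct and follows the paper's argument: factor $W_{kk}=\diag(\tilde{p}_{ji,R})R^{\top}$, use invertibility of $R$ to reduce to the rank of a diagonal matrix, and read the condition off componentwise. You also check two things the paper takes for granted: the telescoping identity $\tilde{p}_{jk}+\tilde{p}_{ki}=\tilde{p}_{ji}$, and the equivalence between full rank and uniqueness of $p_k$.
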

\begin{pf}
    Since $W_{kk} = \diag(\tilde{p}_{ji,R}) R^\top$ and $R$ is non-singular, it follows that $\rank(W_{kk}) = \rank(\diag(\tilde{p}_{ji,R}))$. The diagonal matrix $\diag(\tilde{p}_{ji,R})$ has rank $d$ if and only if $\tilde{p}^l_{ji,R} \neq 0$ for all $l = 1, \dots, d$.
\end{pf}

Lemma \ref{lem:1} motivates the following assumption.
\begin{assum} \label{ass:c}
    For each matrix-valued constraint~\eqref{equ_vvc} defined on triplet \((i,j,k)\), the condition \(\prod_{l=1}^d \tilde{p}^l_{ji,R} \neq 0\) holds.
\end{assum}

Our analysis relies on the following lemma.



\begin{lem}[\cite{horn2013matrix,zhang2005}]
\label{lem:inverse_schur}
For any symmetric block matrix of the form
\begin{equation}\nonumber
M = \begin{bmatrix}
A & B \\
B^\top & C
\end{bmatrix},
\end{equation}
where $C$ is invertible, let $S = A - B C^{-1} B^\top$ denote the Schur complement of $C$ in $M$. Then the following properties hold:

\begin{enumerate}
    \item The inverse of $M$ exists if and only if both $C$ and $S$ are invertible, in which case it is given by:
    \begin{equation} \nonumber
    M^{-1} = \begin{bmatrix}
    S^{-1} & ~~-S^{-1} B C^{-1} \\
    -C^{-1} B^\top S^{-1} & ~~C^{-1} + C^{-1} B^\top S^{-1} B C^{-1}
    \end{bmatrix}.
    \end{equation}
    \label{shur1}

    \item $M \succ 0$ if and only if $C \succ 0$ and $S \succ 0$.
    \label{shur2}

    \item If $C \succ 0$, then $M \succeq 0$ if and only if $S \succeq 0$.
    \label{shur3}
\end{enumerate}
\end{lem}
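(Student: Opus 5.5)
The three properties are classical facts about Schur complements. I will prove all of them with one block $LDL^\top$-type factorization. Because $C$ is invertible, I set
\begin{equation}\nonumber
U = \begin{bmatrix} I & -BC^{-1} \\ 0 & I \end{bmatrix},
\end{equation}
which is upper block-triangular with identity diagonal blocks, hence invertible with $U^{-1} = \begin{bmatrix} I & BC^{-1} \\ 0 & I\end{bmatrix}$. A direct block multiplication, using the symmetry $C^\top = C$, gives the congruence
\begin{equation}\nonumber
U M U^\top = \begin{bmatrix} S & 0 \\ 0 & C \end{bmatrix},
\end{equation}
and therefore $M = U^{-1}\diag(S, C)\, U^{-\top}$ in block form. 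This identity is the core of the argument, and checking it is a routine computation.

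\textbf{Property 1.} Since $U$ is invertible, $M$ is invertible if and only if the block-diagonal matrix $\diag(S,C)$ is invertible. That holds if and only if both $S$ and $C$ are invertible. In that case
\begin{equation}\nonumber
M^{-1} = U^\top \diag(S^{-1}, C^{-1})\, U.
\end{equation}
Multiplying out these three factors gives the stated block formula. The off-diagonal blocks are $-S^{-1}BC^{-1}$ and its transpose $-C^{-1}B^\top S^{-1}$. The lower-right block is $C^{-1} + C^{-1}B^\top S^{-1}BC^{-1}$.

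\textbf{Properties 2 and 3.} Congruence by an invertible matrix preserves both positive definiteness and positive semi-definiteness: for any $x$, $x^\top M x = y^\top \diag(S,C)\, y$ with $y = U^{-\top}x$, and $x\mapsto y$ is a bijection. So $M\succ 0$ if and only if $\diag(S,C)\succ 0$, which holds if and only if $S\succ 0$ and $C\succ 0$. This is Property 2. For Property 3, assume $C\succ 0$. Then $\diag(S,C)\succeq 0$ holds if and only if $S\succeq 0$, because the $C$ block already contributes a nonnegative quadratic form. Hence $M\succeq 0$ if and only if $S\succeq 0$.

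The only point that needs care is the role of the standing hypothesis that $C$ is invertible: it is what makes $U$ and $S$ well defined. In Property 2, the ``only if'' direction does not need this hypothesis separately, since $M\succ 0$ already forces $C\succ 0$ as a principal submatrix. There is no genuine obstacle; the main work is the bookkeeping in the block multiplications.
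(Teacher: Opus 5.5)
Your proof is correct: the congruence $UMU^\top=\diag(S,C)$ checks out, and the inverse formula, the positive-definite equivalence and the semidefinite case (when $C\succ 0$) all follow from it as you describe. The paper gives no proof of its own and only cites Horn--Johnson and Zhang, where this block congruence is the standard argument, so your route is essentially the referenced one.
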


\subsection{Positive Semi-Definite Laplacian Construction} \label{subsec:laplacian}  
We now address Problem~\ref{pbm:agent_joining} by connecting \(v\) to two carefully selected agents $i$ and $j$ in $G$ such that the matrix-valued constraint among the triplet $(i,j,v)$ satisfies Assumption~\ref{ass:c}. The extended graph and nominal configuration are defined as:
\begin{equation}\label{eq:Gplus}
\begin{split}
  &G^+= (V\cup\{v\},E\cup\{(v,i),(i,v),(v,j),(j,v)\}), \\     
  &\tilde{p}^+=[\tilde{p}^\top,\tilde{p}_v^\top]^\top.
\end{split}
\end{equation}
The precise construction of the corresponding Laplacian matrix proceeds as follows. For the nominal formation \((G, \tilde{p}, R)\), we first reorder the agents in \(V\) to place \(i\) and \(j\) at the end. The corresponding Laplacian matrix \(L\) can be partitioned as
\begin{equation}
L = \begin{bmatrix}
L_{rr} & L_{rs} \\
L_{sr} & L_{ss}
\end{bmatrix},
\label{eq:L_block}
\end{equation}
where subscript \(r\) corresponds to agents in \(V \setminus \{i,j\}\) and \(s\) corresponds to the pair \(\{i,j\}\).

The geometric constraint for the triplet \((i,j,v)\) follows directly from \eqref{equ_vvc}, and can be written in matrix form as:
\begin{equation}
M_{ijv} p_{ijv} = 0,
\label{eq:iju_matrix}
\end{equation}
where \(M_{ijv} = \begin{bmatrix} W_{jv}, & W_{vi}, & -W_{vv} \end{bmatrix}\), \(W_{vv} = W_{jv} + W_{vi}\), \(p_{ijv} = [p_i^\top, p_j^\top, p_v^\top]^\top\).

To embed this constraint distributively into the Laplacian framework, we define the associated matrix block as:
\begin{equation}
L^{ijv} = M_{ijv}^\top D M_{ijv} \in \mathbb{R}^{3d \times 3d},
\end{equation}
where \(D \succ 0\) is a diagonal design matrix. The role of \(D\) is to provide a degree of freedom for weighting the constraint. This not only enables performance tuning but also establishes a mechanism for future network topology modifications, including agent leaving or edge addition/removal.

To incorporate agent \(v\), we set $\Delta = 0$ and apply the dimension-adjustment operator $\mathcal{E}$ to $L$:
\begin{equation}
\mathcal{E}(L) = L_{\text{pad}} = \begin{bmatrix} L_{rr} & L_{rs} & 0 \\ L_{sr} & L_{ss} & 0 \\ 0 & 0 & 0 \end{bmatrix} \in \mathbb{R}^{d|V^+| \times d|V^+|}.
\label{eq:L_pad}
\end{equation}


The term $\bar{\Delta}$ is constructed by embedding $L^{ijv}$ into a $d|V^+| \times d|V^+|$ zero matrix. In block form,
\begin{equation}\label{eq:Liju_pad}
\begin{split}
 \bar{\Delta} &= L^{ijv}_{\text{pad}} \\
&= \begin{bmatrix}
0 & 0 & 0 & 0 \\
0 & W_{jv}^\top D W_{jv} & W_{jv}^\top D W_{vi} & -W_{jv}^\top D W_{vv} \\
0 & W_{vi}^\top D W_{jv} & W_{vi}^\top D W_{vi} & -W_{vi}^\top D W_{vv} \\
0 & -W_{vv}^\top D W_{jv} & -W_{vv}^\top D W_{vi} & W_{vv}^\top D W_{vv}
\end{bmatrix}.   
\end{split}
\end{equation}

The Laplacian of \((G^+, \tilde{p}^+, R)\) is then given by:
\begin{equation}
L^+ = L_{\text{pad}} + L^{ijv}_{\text{pad}}.
\label{eq:LVA}
\end{equation}


\begin{thm} \label{thm:1}
Let $(G, \tilde{p}, R)$ be a nominal formation in $\mathbb{R}^d$ satisfying $L \succeq 0$, $\ker(L) = \varPi(\tilde{p})$, and $L_{ff} \succ 0$. For a new agent $v$ with $\tilde{p}_v$ and selected $i, j \in V$ such that triplet $(i,j,v)$ satisfies Assumption~\ref{ass:c}, the construction with $\Delta = 0$, $\mathcal{E}$ as in \eqref{eq:L_pad}, and $\bar{\Delta}$ as in \eqref{eq:Liju_pad} yields Laplacian $L^+$ from~\eqref{eq:LVA} satisfying $L^+ \succeq 0$, $\ker(L^+) = \varPi(\tilde{p}^+)$, and $L^+_{ff} \succ 0$.
\end{thm}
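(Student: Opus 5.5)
We will verify the three claimed properties of $L^+ = L_{\text{pad}} + L^{ijv}_{\text{pad}}$ separately, relying on the closure of the PSD cone under addition, on Lemma~\ref{lem:2} and Lemma~\ref{lem:1}, and on Lemma~\ref{lem:inverse_schur}.

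\emph{Positive semi-definiteness.} $L_{\text{pad}}$ is a congruence of $L \succeq 0$ with zero padding, so $L_{\text{pad}} \succeq 0$. The block $L^{ijv}_{\text{pad}}$ equals $P^\top M_{ijv}^\top D M_{ijv} P$, where $P$ is the selection matrix extracting $p_{ijv}$ from $p^+$. Since $D \succ 0$, this term is also PSD, and hence $L^+ \succeq 0$.

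\emph{Kernel.} Because both summands are PSD, $\ker(L^+) = \ker(L_{\text{pad}}) \cap \ker(L^{ijv}_{\text{pad}})$. This follows from $x^\top L^+ x = 0$ if and only if both quadratic forms vanish. A vector $p^+ = [p^\top, p_v^\top]^\top$ lies in $\ker(L_{\text{pad}})$ if and only if $p \in \ker(L) = \varPi(\tilde p)$, with $p_v$ free. It lies in $\ker(L^{ijv}_{\text{pad}})$ if and only if $M_{ijv} p_{ijv} = 0$, again using $D \succ 0$.

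For the inclusion $\varPi(\tilde p^+) \subseteq \ker(L^+)$, take $p^+ = (I \otimes S(s,R))\tilde p^+ + 1 \otimes \tau$. Its first $n$ blocks lie in $\varPi(\tilde p)$. By Lemma~\ref{lem:2}, the constraint is invariant under non-uniform scaling and translation, and it is satisfied at the nominal configuration $\tilde p_{ijv}$, since $W_{jv}\tilde p_{iv} + W_{vi}\tilde p_{jv} = R\,\cdot$ (a product of diagonal matrices that cancels). So $M_{ijv} p_{ijv} = 0$.

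For the reverse inclusion, take $p^+ \in \ker(L^+)$. Then $p = (I \otimes S(s,R))\tilde p + 1 \otimes \tau$ for some $s, \tau$. By Assumption~\ref{ass:c} and Lemma~\ref{lem:1}, $W_{vv}$ is invertible, so $p_v = W_{vv}^{-1}(W_{jv} p_i + W_{vi} p_j)$ is uniquely determined. The candidate $S(s,R)\tilde p_v + \tau$ also satisfies this constraint by the previous paragraph. Uniqueness therefore forces $p_v$ to equal it, and so $p^+ \in \varPi(\tilde p^+)$.

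\emph{Follower block.} The new agent $v$ is a follower, so
\[
L^+_{ff} = \begin{bmatrix} L_{ff} + E & B \\ B^\top & W_{vv}^\top D W_{vv} \end{bmatrix}.
\]
Here $E \succeq 0$ is the contribution of $L^{ijv}$ to the follower rows of $i$ and $j$ (if they are followers), and $B$ collects the $i$/$j$--$v$ cross terms. We apply Lemma~\ref{lem:inverse_schur}(\ref{shur2}) with $C = W_{vv}^\top D W_{vv}$, which is positive definite because $W_{vv}$ is invertible. The Schur complement is $L_{ff} + E - B C^{-1} B^\top$.

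The main obstacle is showing that this Schur complement is positive definite. The plan is to note that $E - BC^{-1}B^\top$ is the Schur complement of $C$ in the principal submatrix of $L^{ijv}$ restricted to the follower-indexed rows among $\{i,j\}$ together with $v$. That submatrix is PSD, so by Lemma~\ref{lem:inverse_schur}(\ref{shur3}) its Schur complement is PSD. Concretely, it is $G^\top D G - G^\top D W_{vv} (W_{vv}^\top D W_{vv})^{-1} W_{vv}^\top D G$. Since $W_{vv}$ is square and invertible, the projection $D W_{vv}(W_{vv}^\top D W_{vv})^{-1}W_{vv}^\top$ reduces to the identity, and this Schur complement is in fact exactly $0$. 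Hence $L^+_{ff}$ has Schur complement $L_{ff} \succ 0$, which gives $L^+_{ff} \succ 0$. Leaders $i$ or $j$ simply contribute no rows to $E$, so the same argument covers every case.
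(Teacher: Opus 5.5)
Your proof is correct, and it takes a more economical route than the paper in two places.

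\textbf{Follower block.} The paper splits into three cases according to whether $i,j$ are leaders or followers. In the two cases involving followers it runs a nested Schur-complement argument. It first shows that the block $S$ on $\{j,v\}$ (resp.\ $\{i,j,v\}$) is positive definite. It then uses the block-inverse formula, part (1) of Lemma~\ref{lem:inverse_schur}, to get $S^{-1}_{11}=L_{jj}^{-1}$ (resp.\ $L_{ss}^{-1}$). Hence the Schur complement of $S$ in $L^+_{ff}$ equals the Schur complement of $L_{jj}$ (resp.\ $L_{ss}$) in $L_{ff}$.

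You instead eliminate the $v$ block directly from all of $L^+_{ff}$ and find that the Schur complement is exactly $L_{ff}$. This rests on the same cancellation the paper uses: $E-BC^{-1}B^\top=0$ because $W_{vv}$ is square and invertible. Your version handles all three cases uniformly and needs only part (2) of the lemma.

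\textbf{Kernel.} The paper proves $\varPi(\tilde p^+)\subseteq\ker(L^+)$ and then closes by dimension counting. That step uses $\rank(L^+)=\rank(L)+d$, which the paper says follows ``immediately'' from $L^+_{ff}\succ 0$, together with the unstated identity $\dim\varPi(\tilde p^+)=\dim\varPi(\tilde p)$.

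You prove the reverse inclusion directly, from $\ker(L^+)=\ker(L_{\text{pad}})\cap\ker(L^{ijv}_{\text{pad}})$ and the uniqueness of $p_v$ guaranteed by invertibility of $W_{vv}$. This is independent of the leader/follower split. It also shows that dropping the $v$-block maps $\ker(L^+)$ bijectively onto $\ker(L)$, which is exactly what justifies both of the steps the paper leaves implicit.

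\textbf{Cosmetic fix.} Your justification that the constraint holds at the nominal configuration is garbled; there is no leading $R$. It should read $W_{jv}\tilde p_{iv}+W_{vi}\tilde p_{jv}=\diag(\tilde p_{jv,R})\tilde p_{iv,R}+\diag(\tilde p_{vi,R})\tilde p_{jv,R}=0$. This holds because $\diag(a)b=\diag(b)a$ and $\tilde p_{vi}=-\tilde p_{iv}$.
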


\begin{pf}
Since \(L \succeq 0\) and \(L^{ijv} = M_{ijv}^\top D M_{ijv} \succeq 0\) (as \(D \succ 0\)), it follows that \(L_{\text{pad}} \succeq 0\) and \(L^{ijv}_{\text{pad}} \succeq 0\). Thus, \(L^+ = L_{\text{pad}} + L^{ijv}_{\text{pad}} \succeq 0\).

Next, we prove that \(L^+_{ff} \succ 0\). There are three possible cases overall, determined by whether the selected agents \(i\) and \(j\) are leaders or followers, as listed below.

\textbf{Case 1: Both \(i, j \in V_l\).} Then \(v\) connects only to leaders, so \(L^+_{ff}\) decouples as
\begin{equation}
L^+_{ff} = \begin{bmatrix} L_{ff} & 0 \\ 0 & W_{vv}^\top D W_{vv} \end{bmatrix},
\end{equation}
where \(L_{ff} \succ 0\) is the original follower submatrix. By Assumption \ref{ass:c} and Lemma \ref{lem:1}, \(\rank(W_{vv}) = d\), so \(W_{vv}\) is invertible and \(W_{vv}^\top D W_{vv} \succ 0\) (since \(D \succ 0\)). Thus, \(L^+_{ff} \succ 0\).

\textbf{Case 2: One of \(i, j \in V_l\), the other in \(V_f\).} Without loss of generality, assume \(i \in V_l\), \(j \in V_f\). Partition the original \(L_{ff}\) as
\begin{equation}
L_{ff} = \begin{bmatrix} L_{P1} & L_{P2} \\ L_{P2}^\top & L_{jj} \end{bmatrix} \succ 0,
\end{equation}
where \(L_{P1}\) is the block corresponding to \(V_f \setminus  {j}\), \(L_{P2} \) is the coupling block between \(V_f \setminus  {j}\) and \(j\), and \(L_{jj} \in \mathbb{R}^{d \times d}\) is the block corresponding to follower \(j\). Since \(L_{ff} \succ 0\), by part (2) of Lemma \ref{lem:inverse_schur}, the Schur complement \(L_{P1} - L_{P2} L_{jj}^{-1} L_{P2}^\top \succ 0\) and \(L_{jj} \succ 0\).

After adding \(v\), we have
\begin{equation}
L^+_{ff} = \begin{bmatrix} L_{P1} & L_{P2} & 0 \\ L_{P2}^\top & L_{jj} + L^{ijv}_{jj} & L^{ijv}_{jv} \\ 0 & L^{ijv}_{vj} & L^{ijv}_{vv} \end{bmatrix},
\end{equation}
where \(L^{ijv}_{jj} = W_{vi}^\top D W_{vi}\), \(L^{ijv}_{jv} = -W_{vi}^\top D W_{vv}\), \(L^{ijv}_{vj} = -W_{vv}^\top D W_{vi}\), and \(L^{ijv}_{vv} = W_{vv}^\top D W_{vv}\). 

Since \(L^{ijv}_{vv}=W_{vv}^\top D W_{vv} \succ 0\) as in Case 1, and the Schur complement of \(L^{ijv}_{vv}\) is
\begin{equation}
    \begin{split}
        &L_{jj} + L^{ijv}_{jj} - L^{ijv}_{jv} (L^{ijv}_{vv})^{-1} L^{ijv}_{vj} = L_{jj} + W_{vi}^\top D W_{vi} \\
        &- W_{vi}^\top D W_{vv} (W_{vv}^\top D W_{vv})^{-1} W_{vv}^\top D W_{vi}=L_{jj} \succ 0.
    \end{split}
    \nonumber
\end{equation}
By part (2) of Lemma \ref{lem:inverse_schur}, we have
\begin{equation}
S = \begin{bmatrix} L_{jj} + L^{ijv}_{jj} & L^{ijv}_{jv} \\ L^{ijv}_{vj} & L^{ijv}_{vv} \end{bmatrix} \succ 0.
\end{equation}

Now examine the Schur complement of $S$ in $L^+_{ff}$:
\begin{equation}
L_{P1} - \begin{bmatrix} L_{P2} & 0 \end{bmatrix} S^{-1} \begin{bmatrix} L_{P2}^\top \\ 0 \end{bmatrix}.
\end{equation}
The (1,1)-block of $S^{-1}$ is obtained by applying Lemma \ref{lem:inverse_schur} (1) to the block matrix $S$, yielding
\begin{equation}
S^{-1}_{11} = \left(L_{jj} + L^{ijv}_{jj} - L^{ijv}_{jv}(L^{ijv}_{vv})^{-1}L^{ijv}_{vj}\right)^{-1} = L_{jj}^{-1}.
\end{equation}
Thus,
\begin{equation}
L_{P1} - L_{P2} S^{-1}_{11} L_{P2}^\top = L_{P1} - L_{P2} L_{jj}^{-1} L_{P2}^\top \succ 0.
\end{equation}
Since $S \succ 0$ and its Schur complement is positive definite, by part (2) of Lemma \ref{lem:inverse_schur}, $L^+_{ff} \succ 0$.

\textbf{Case 3: Both \(i, j \in V_f\).} Partition the original \(L_{ff}\) as
\begin{equation}
L_{ff} = \begin{bmatrix} L_{P1} & L_{P2} \\ L_{P2}^\top & L_{ss} \end{bmatrix} \succ 0,
\end{equation}
where \(L_{P1}\) is the block corresponding to \(V_f \setminus \{i,j\}\), \(L_{P2}\) is the coupling block between \(V_f \setminus \{i,j\}\) and the pair \(\{i,j\}\), and \(L_{ss} \in \mathbb{R}^{2d \times 2d}\) is the block corresponding to followers \(i\) and \(j\). Since \(L_{ff} \succ 0\), by part (2) of Lemma \ref{lem:inverse_schur}, the Schur complement \(L_{P1} - L_{P2} L_{ss}^{-1} L_{P2}^\top \succ 0\) and \(L_{ss} \succ 0\).

After adding \(v\), \(L^+_{ff}\) is
\begin{equation}
L^+_{ff} = \begin{bmatrix} L_{P1} & L_{P2} & 0 \\ L_{P2}^\top & L_{ss} + L^{ijv}_{ss} & L^{ijv}_{sv} \\ 0 & L^{ijv}_{vs} & L^{ijv}_{vv} \end{bmatrix},
\end{equation}
where 
\begin{equation}
    \begin{split}
        &L^{ijv}_{ss} = \begin{bmatrix}
                W_{jv}^\top D W_{jv} & W_{jv}^\top D W_{vi} \\
                W_{vi}^\top D W_{jv} & W_{vi}^\top D W_{vi}
                \end{bmatrix}, \,
                L^{ijv}_{sv} = \begin{bmatrix}
                -W_{jv}^\top D W_{vv} \\
                -W_{vi}^\top D W_{vv}
                \end{bmatrix}, \\
        &L^{ijv}_{vs} = -\begin{bmatrix}
                W_{vv}^\top D W_{jv} & W_{vv}^\top D W_{vi}
                \end{bmatrix}, \,
                L^{ijv}_{vv} = W_{vv}^\top D W_{vv}.
    \end{split}
    \nonumber
\end{equation}
Consider the bottom-right block \(S = \begin{bmatrix} L_{ss} + L^{ijv}_{ss} & L^{ijv}_{sv} \\ L^{ijv}_{vs} & L^{ijv}_{vv} \end{bmatrix}\). Since \(L^{ijv}_{vv} \succ 0\) and its Schur complement $L_{ss} + L^{ijv}_{ss} - L^{ijv}_{sv}(L^{ijv}_{vv})^{-1}L^{ijv}_{vs} = L_{ss} \succ 0$, by part (2) of Lemma \ref{lem:inverse_schur}, \(S \succ 0\). Now examine the Schur complement of \(S\) in \(L^+_{ff}\):
\begin{equation}
L_{P1} - \begin{bmatrix} L_{P2} & 0 \end{bmatrix} S^{-1} \begin{bmatrix} L_{P2}^\top \\ 0 \end{bmatrix}.
\end{equation}
The (1,1)-block of $S^{-1}$ is obtained by applying Lemma \ref{lem:inverse_schur} (1) to the block matrix $S$, yielding:
\begin{equation}
S^{-1}_{11} = \left(L_{ss} + L^{ijv}_{ss} - L^{ijv}_{sv}(L^{ijv}_{vv})^{-1}L^{ijv}_{vs}\right)^{-1} = L_{ss}^{-1}.
\end{equation}
Thus:
\begin{equation}
L_{P1} - L_{P2} S^{-1}_{11} L_{P2}^\top = L_{P1} - L_{P2} L_{ss}^{-1} L_{P2}^\top \succ 0.
\end{equation}
Since \(S \succ 0\) and its Schur complement is positive definite, by part (2) of Lemma \ref{lem:inverse_schur}, \(L^+_{ff} \succ 0\).

Having established \(L^+_{ff} \succ 0\) in all cases, we immediately obtain the rank condition:
\begin{equation}
\text{rank}(L^+) = \text{rank}(L) + d.
\end{equation}

We now prove $\ker(L^+) = \varPi(\tilde{p}^+)$. First, by construction and Lemma \ref{lem:2}, any $p^+ \in \varPi(\tilde{p}^+)$ satisfies:
\begin{equation}
L_{\text{pad}} p^+ = 0 \quad \text{and} \quad L^{ijv}_{\text{pad}} p^+ = 0,
\end{equation}
and thus $L^+ p^+ = 0$. This shows $\varPi(\tilde{p}^+) \subseteq \ker(L^+)$.

For the reverse inclusion, we compare dimensions. By the rank-nullity theorem and the established rank condition:
\begin{equation}
\dim(\ker(L^+)) = d(n+1) - \text{rank}(L^+) = dn - \text{rank}(L).
\end{equation}
Since $\ker(L) = \varPi(\tilde{p})$ for the nominal formation, we have $\dim(\ker(L)) = dn - \text{rank}(L)$. Therefore:
\begin{equation}
\dim(\ker(L^+)) = \dim(\varPi(\tilde{p}^+)).
\end{equation}
As $\varPi(\tilde{p}^+)$ is a subspace of $\ker(L^+)$ and their dimensions are equal, we conclude:
\begin{equation}
\ker(L^+) = \varPi(\tilde{p}^+).
\end{equation}
This completes the proof.
\end{pf}

\begin{algorithm}[htbp]
\caption{Distributed Agent Joining Protocol}
\label{alg:agent_joining}
\begin{algorithmic}[1]
\Require New agent $v$ with $\tilde{p}_v$; formation $(G, \tilde{p}, R)$ with $L \succeq 0$, $\ker(L) = \varPi(\tilde{p})$, $L_{ff} \succ 0$; each agent $i$ maintains $\{L_{ij} : j \in \mathcal{N}_i\}$
\Ensure Extended formation $(G^+, \tilde{p}^+, R)$ with $L^+ \succeq 0$, $\ker(L^+) = \varPi(\tilde{p}^+)$, $L^+_{ff} \succ 0$

\State Agent $v$ discovers candidates in sensing range and broadcasts join request
\State Candidates respond with nominal configurations; $v$ selects $i,j$ from candidates such that triplet $(i,j,v)$ satisfies Assumption~\ref{ass:c}

\State Agent $v$ establishes edge connections with $i$ and $j$, forming $(G^+, \tilde{p}^+, R)$ where $G^+ \leftarrow (V \cup \{v\}, E \cup \{(v,i),(i,v),(v,j),(j,v)\})$, $\tilde{p}^+ \leftarrow [\tilde{p}^\top, \tilde{p}_v^\top]^\top$

\State Each agent $k \in \{i, j, v\}$ computes and updates its local blocks of $L^+= [L^+_{ij}]$: 
\begin{equation}
\{L^+_{kl} \leftarrow L^{pad}_{kl} + \Delta L_{kl} \in \mathbb{R}^{d \times d} : l \in \mathcal{N}_k^+\}
\nonumber
\end{equation}
where $L^{pad} = \begin{bmatrix} L & 0 \\ 0 & 0 \end{bmatrix}$; the increments $\Delta L_{kl}$ are derived from \eqref{eq:Liju_pad} and $\mathcal{N}_i^+ \leftarrow \mathcal{N}_i \cup \{v\}$, $\mathcal{N}_j^+ \leftarrow \mathcal{N}_j \cup \{v\}$, $\mathcal{N}_v^+ \leftarrow \{i,j\}$

\State \textbf{return} $(G^+, \tilde{p}^+, R)$ with $L^+$
\end{algorithmic}
\end{algorithm}

\begin{rem}
Theorem \ref{thm:1} can be applied iteratively to add new agents, allowing the original formation to be extended to include more agents while preserving the desired properties of the Laplacian matrix. In contrast to methods relying on centralized optimization in (\cite{Mukherjee2020, Xiao2022}), our approach constructs the formation based on purely local, matrix-valued constraints \eqref{eq:iju_matrix}. Compared with the approach in \cite{Yang2019,Li2025}, which require each new agent to connect to $d+1$ existing agents with scalar edge weights and are limited to $\mathbb{R}^2$ and $\mathbb{R}^3$, our method connects each new agent to only two existing agents using matrix-valued edge weights, achieving a unified constructive procedure in arbitrary dimension $\mathbb{R}^d$.

\end{rem}

\subsection{Distributed Agent Joining Protocol}
Building upon the matrix-valued constraints (Subsection~\ref{subsec:constraints}) and the Laplacian construction method (Subsection~\ref{subsec:laplacian}), we now integrate these components into a executable distributed protocol. Algorithm~\ref{alg:agent_joining} outlines the step-by-step process for a new agent to dynamically join an existing formation. The protocol is fully distributed, requiring only local communication and computation; notably, the new agent needs no global knowledge of the network, agent indexing, or leader/follower roles.

\section{Simulation}
This section validates the theoretical results through two numerical experiments.

\subsection{Construction of Positive Semi-Definite Laplacian with Agent Joining} \label{sec:sim1}
This subsection validates the positive semi-definite Laplacian construction method described in Subsection~\ref{subsec:laplacian}. 
We begin with a nominal configuration in $\mathbb{R}^2$ with $R = I_2$ and the following nominal positions:
\begin{align*}
\tilde{p}_1 &= [-3, 3]^\top, \quad \tilde{p}_2 = [3, 2]^\top, \quad \tilde{p}_3 = [2, 0]^\top, \\
\tilde{p}_4 &= [1, -1]^\top, \quad \tilde{p}_5 = [0, -2]^\top, \quad \tilde{p}_6 = [-2, -3]^\top.
\end{align*}
The original Laplacian matrix $L$ is constructed to satisfy $L \succeq 0$, $\ker(L) = \varPi(\tilde{p})$, and $L_{ff} \succ 0$, with the corresponding edge weights shown in the left part of Fig.~\ref{fig:cycle}.

We add agent 7 at $\tilde{p}_7 = [-1, -2.1]^\top$ by connecting it to agents $5$ and $6$ to form the triplet $(5,6,7)$ as per Assumption~\ref{ass:c}. With $D = I_2$ and relative positions $\tilde{p}_{67,R} = [-1, -0.9]^\top$, $\tilde{p}_{75,R} = [-1, -0.1]^\top$, we compute the weight matrices $W_{67} = -\diag([1, 0.9]^{\top})$, $W_{75} = -\diag([1, 0.1]^{\top})$, $W_{77} = W_{67} + W_{75} = -\diag([2, 1]^{\top})$, and obtain the edge weights as shown in the middle part of Fig.~\ref{fig:cycle}:
\begin{align*}
L_{56} &= W_{67}^\top D W_{75} = \diag([1, 0.09]^{\top}), \\
L_{57} &= -W_{67}^\top D W_{77} = \diag([-2, -0.9]^{\top}), \\
L_{67} &= -W_{75}^\top D W_{77} = \diag([-2, -0.1]^{\top}).
\end{align*}

The edge weights in the right part of Fig.~\ref{fig:cycle} result from adding the corresponding edge weights in the left and middle parts. The Laplacian $L^+$ maintains all desired spectral properties, validating the Theorem~\ref{thm:1}.

\begin{figure}[!t]
\centering

\begin{tikzpicture}[scale=0.4, every node/.style={scale=0.5}]
    \begin{scope}[local bounding box=graph1]
        \coordinate (v1) at (-3,3);
        \coordinate (v2) at (3,2);
        \coordinate (v3) at (2,0);
        \coordinate (v4) at (1,-1);
        \coordinate (v5) at (0,-2);
        \coordinate (v6) at (-2,-3);
        
        \foreach \point/\name/\pos in {v1/1/above, v2/2/below, v3/3/below, v4/4/below, v5/5/below, v6/6/below} {
            \filldraw (\point) circle (4pt) node[\pos, yshift=0pt, xshift=4pt] {\name};
        }
        
        \draw[thick] (v1) -- node[midway, above] {$\begin{bmatrix}
            5 & 0 \\
            0 & -6
        \end{bmatrix}$} (v2);
        \draw[thick] (v2) -- node[midway, right, xshift=-0.2cm, yshift=-0.4cm] {$\begin{bmatrix}
            -30 & 0 \\
            0 & -3
        \end{bmatrix}$} (v3);
        \draw[thick] (v3) -- node[midway, right, xshift=-0.2cm, yshift=-0.4cm] {$\begin{bmatrix}
            -30 & 0 \\
            0 & -6
        \end{bmatrix}$} (v4);
        \draw[thick] (v4) -- node[midway, right, xshift=-0.2cm, yshift=-0.4cm] {$\begin{bmatrix}
            -30 & 0 \\
            0 & -6
        \end{bmatrix}$} (v5);
        \draw[thick] (v5) -- node[midway, right, xshift=-0.3cm, yshift=-0.4cm] {$\begin{bmatrix}
            -15 & 0 \\
            0 & -6
        \end{bmatrix}$} (v6);
        \draw[thick] (v1) -- node[midway, left] {$\begin{bmatrix}
            -30 & 0 \\
            0 & 1
        \end{bmatrix}$} (v6);
    \end{scope}
    
    \node[right] at (3,-3) {\Huge$+$};

    \begin{scope}[local bounding box=triangle, shift={(7,0)}]
        \coordinate (v5) at (0,-2);
        \coordinate (v6) at (-2,-3);
        \coordinate (v7) at (-1,-2.1);
        
        \foreach \point/\name/\pos in {v5/5/below, v6/6/below, v7/7/above} {
            \filldraw[blue] (\point) circle (4pt) node[\pos, yshift=0pt, xshift=4pt] {\name};
        }
        
        \draw[thick, blue] (v5) -- node[midway, right, xshift=-0.3cm, yshift=-0.4cm] {$\begin{bmatrix}
            1 & 0 \\
            0 & 0.09
        \end{bmatrix}$} (v6);
        \draw[thick, blue] (v6) -- node[midway, left, xshift=0.4cm, yshift=0.4cm] {$\begin{bmatrix}
            -2 & 0 \\
            0 & -0.1
        \end{bmatrix}$} (v7);
        \draw[thick, blue] (v5) -- node[midway, above, xshift=0cm, yshift=0.4cm] {$\begin{bmatrix}
            -2 & 0 \\
            0 & -0.9
        \end{bmatrix}$} (v7);
    \end{scope}
    
    \node[right] at (8,-3) {\Huge$=$};
    
    \begin{scope}[local bounding box=result, shift={(13,0)}]
        \coordinate (v1) at (-3,3);
        \coordinate (v2) at (3,2);
        \coordinate (v3) at (2,0);
        \coordinate (v4) at (1,-1);
        \coordinate (v5) at (0,-2);
        \coordinate (v6) at (-2,-3);
        \coordinate (v7) at (-1,-2.1);
        
        \foreach \point/\name/\pos in {v1/1/above, v2/2/below, v3/3/below, v4/4/below, v5/5/below, v6/6/below} {
            \filldraw (\point) circle (4pt) node[\pos, yshift=0pt, xshift=4pt] {\name};
        }
        \filldraw[red] (v7) circle (4pt) node[above, yshift=0pt, xshift=4pt] {7};
                
        \draw[thick] (v1) -- node[midway, above] {$\begin{bmatrix}
            5 & 0 \\
            0 & -6
        \end{bmatrix}$} (v2);
        \draw[thick] (v2) -- node[midway, right, xshift=-0.2cm, yshift=-0.4cm] {$\begin{bmatrix}
            -30 & 0 \\
            0 & -3
        \end{bmatrix}$} (v3);
        \draw[thick] (v3) -- node[midway, right, xshift=-0.2cm, yshift=-0.4cm] {$\begin{bmatrix}
            -30 & 0 \\
            0 & -6
        \end{bmatrix}$} (v4);
        \draw[thick] (v4) -- node[midway, right, xshift=-0.2cm, yshift=-0.4cm] {$\begin{bmatrix}
            -30 & 0 \\
            0 & -6
        \end{bmatrix}$} (v5);
        \draw[thick, red] (v5) -- node[midway, right, xshift=-0.3cm, yshift=-0.4cm] {$\begin{bmatrix}
            -14 & 0 \\
            0 & -5.91
        \end{bmatrix}$} (v6);
        \draw[thick] (v1) -- node[midway, left] {$\begin{bmatrix}
            -30 & 0 \\
            0 & 1
        \end{bmatrix}$} (v6);
        \draw[thick, red] (v6) -- node[midway, left, xshift=0.4cm, yshift=0.4cm] {$\begin{bmatrix}
            -2 & 0 \\
            0 & -0.1
        \end{bmatrix}$} (v7);
        \draw[thick, red] (v5) -- node[midway, above, xshift=0cm, yshift=0.4cm] {$\begin{bmatrix}
            -2 & 0 \\
            0 & -0.9
        \end{bmatrix}$} (v7);
    \end{scope}
\end{tikzpicture}

\caption{agent joining. The original graph (left) combines with a triangle subgraph (middle) to form the extended graph (right).}
\label{fig:cycle}
\end{figure}

\begin{figure}[!tbp]
    \centering

    \begin{subfigure}[b]{0.45\textwidth} 
        \centering
        \includegraphics[width=\linewidth]{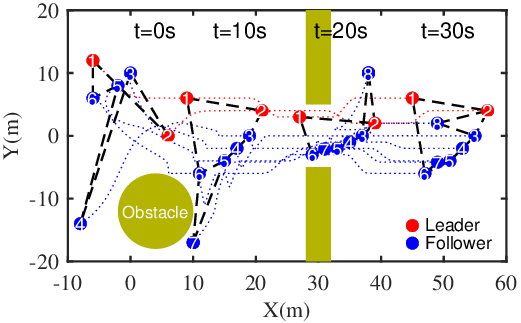}
        \caption{Agent trajectories.}
        \label{fig_te_va}
    \end{subfigure}
    \hfill
    \begin{subfigure}[b]{0.45\textwidth}
        \centering
        \includegraphics[width=\linewidth]{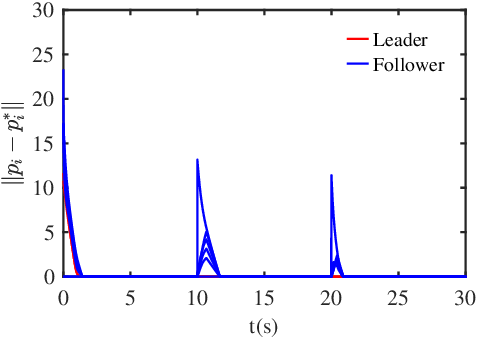}
        \caption{Tracking errors.}
        \label{fig_t_va}
    \end{subfigure}
    \caption{Formation maneuver control under agent joining (agent $7$ joins at $t=10$s, agent $8$ joins at $t=20$s).} 
    \label{fig_cycle_va}
\end{figure}

\subsection{Formation Maneuver Control with Dynamic Agent Joining}
This simulation couples Algorithm~\ref{alg:agent_joining} with the formation tracking control law from~\cite{Fang2022} to demonstrate non-uniform scaling maneuvers of a formation in a cluttered environment with dynamic agent joining.

For completeness, the control law is described as follows. Each leader $i \in V_l$ tracks a predefined trajectory via
\begin{equation}
u_i = -\alpha_1 \tanh(\alpha_2 (p_i - p^*_i)) + \dot{p}^*_i,
\label{equ_lu}
\end{equation}
where $p^*_i=S(s, R)\tilde{p}_i+\tau$, $\alpha_1,\alpha_2>0$ are control gains, and $\tanh(\cdot)$ denotes the hyperbolic tangent function. Each follower $i \in V_f$ maintains the formation using relative position measurements via
\begin{equation}
u_i = -\beta_1 \sum_{j \in N_i} L_{ij}(p_j-p_i) - \beta_2 \operatorname{sgn}\!\left(\sum_{j \in N_i} L_{ij}(p_j-p_i)\right),
\label{equ_pu}
\end{equation}
where $\beta_1,\beta_2>0$ are control gains, $\text{sgn}(\cdot)$ denotes the signum function, and $L_{ij}$ are the matrix-valued edge weights designed in previous sections.

The simulation timeline includes formation establishment (0--10 s), the integration of agent 7 at $10$ s, a non-uniform scaling maneuver (15--25 s), and the integration of agent 8 at $20$ s. As shown in Fig.~\ref{fig_cycle_va}(a), the initial six-agent formation achieves and maintains the desired shape during the establishment phase, using the nominal formation and Laplacian from Subsection~\ref{sec:sim1}. At $t = 10$ s, Agent 7 joins at nominal position $\tilde{p}_7 = [-1, -2.1]^\top$ by executing Algorithm~\ref{alg:agent_joining} with the triplet $(5,6,7)$, which triggers a one-time distributed update of the Laplacian matrix. From $t=15$ s to $t=25$ s, the formation performs a non-uniform scaling maneuver, compressing by 50\% along the y-axis to navigate a narrow passage. This specific maneuver is necessary since uniform scaling or rigid-body motion would result in inter-agent collisions or an inability to pass through the constriction. At $t = 20$ s, during the active scaling, agent 8 joins at $\tilde{p}_8 = [-1, 1]^\top$ by executing Algorithm~\ref{alg:agent_joining} with the triplet $(2,3,8)$ and $D = I_2$, testing the framework's ability to handle simultaneous shape deformation and network expansion.


Fig. \ref{fig_cycle_va} (b) shows that the tracking errors converge to zero, with only a minor transient disturbance during the agent joining event at $10$ s and $20$ s.


\section{Conclusion}
By integrating matrix-valued constraints with spectral graph theory, we have presented a distributed framework for the dynamic expansion of a formation in $\mathbb{R}^d$ that is capable of non-uniform scaling transformations while preserving the spectral properties of the graph Laplacian. Future work will extend the framework to more general topology modifications.



\bibliography{ifacconf}             

\begin{thebibliography}{22}
\providecommand{\natexlab}[1]{#1}
\providecommand{\url}[1]{\texttt{#1}}
\providecommand{\urlprefix}{URL }
\expandafter\ifx\csname urlstyle\endcsname\relax
  \providecommand{\doi}[1]{doi:\discretionary{}{}{}#1}\else
  \providecommand{\doi}{doi:\discretionary{}{}{}\begingroup
  \urlstyle{rm}\Url}\fi

\bibitem[{Buckley and Egerstedt(2021)}]{Buckley2021}
Buckley, I. and Egerstedt, M. (2021).
\newblock Infinitesimal shape-similarity for characterization and control of
  bearing-only multirobot formations.
\newblock \emph{IEEE Transactions on Robotics}, 37(6), 1921--1935.

\bibitem[{Chen et~al.(2021)Chen, Cao, and Li}]{Chen2021}
Chen, L., Cao, M., and Li, C. (2021).
\newblock Angle rigidity and its usage to stabilize multiagent formations in
  2-d.
\newblock \emph{IEEE Transactions on Automatic Control}, 66(8), 3667--3681.

\bibitem[{Fang et~al.(2022)Fang, Li, and Xie}]{Fang2022}
Fang, X., Li, X., and Xie, L. (2022).
\newblock Distributed formation maneuver control of multiagent systems over
  directed graphs.
\newblock \emph{IEEE Transactions on Cybernetics}, 52(8), 8201--8212.

\bibitem[{Fang and Xie(2024)}]{Fang2024}
Fang, X. and Xie, L. (2024).
\newblock Distributed formation maneuver control using complex laplacian.
\newblock \emph{IEEE Transactions on Automatic Control}, 69(3), 1850--1857.

\bibitem[{Franceschelli and Frasca(2021)}]{Franceschelli2021}
Franceschelli, M. and Frasca, P. (2021).
\newblock Stability of open multiagent systems and applications to dynamic
  consensus.
\newblock \emph{IEEE Transactions on Automatic Control}, 66(5), 2326--2331.

\bibitem[{{Garcia de Marina}(2021)}]{Hector2021}
{Garcia de Marina}, H. (2021).
\newblock Distributed formation maneuver control by manipulating the complex
  laplacian.
\newblock \emph{Automatica}, 132, 109813.

\bibitem[{He and Jing(2025)}]{he2025}
He, T. and Jing, G. (2025).
\newblock Distributed non-uniform scaling control of multi-agent formation via
  matrix-valued constraints.
\newblock \urlprefix\url{https://arxiv.org/abs/2508.02289}.

\bibitem[{Horn and Johnson(2013)}]{horn2013matrix}
Horn, R.A. and Johnson, C.R. (2013).
\newblock \emph{Matrix Analysis}.
\newblock Cambridge University Press.

\bibitem[{Huang and Jing(2026)}]{Huang2026}
Huang, J. and Jing, G. (2026).
\newblock Signed angle rigid graphs for network localization and formation
  control.
\newblock \emph{IEEE Transactions on Automatic Control}, 1--16.

\bibitem[{Jing et~al.(2019)Jing, Zhang, Lee, and Wang}]{Jing2019}
Jing, G., Zhang, G., Lee, H.W.J., and Wang, L. (2019).
\newblock Angle-based shape determination theory of planar graphs with
  application to formation stabilization.
\newblock \emph{Automatica}, 105, 117--129.

\bibitem[{Kamel et~al.(2020)Kamel, Yu, and Zhang}]{KAMEL2020128}
Kamel, M.A., Yu, X., and Zhang, Y. (2020).
\newblock Formation control and coordination of multiple unmanned ground
  vehicles in normal and faulty situations: A review.
\newblock \emph{Annual Reviews in Control}, 49, 128--144.

\bibitem[{Li et~al.(2026)Li, Chen, Wang, Li, and Shen}]{Li2025}
Li, H., Chen, H., Wang, X., Li, Z., and Shen, L. (2026).
\newblock Distributed framework construction for affine formation control.
\newblock \emph{IEEE Transactions on Automatic Control}, 71(1), 243--258.

\bibitem[{Li and Xie(2018)}]{Li2018}
Li, X. and Xie, L. (2018).
\newblock Dynamic formation control over directed networks using graphical
  laplacian approach.
\newblock \emph{IEEE Transactions on Automatic Control}, 63(11), 3761--3774.

\bibitem[{Lin et~al.(2014)Lin, Wang, Han, and Fu}]{Lin2014}
Lin, Z., Wang, L., Han, Z., and Fu, M. (2014).
\newblock Distributed formation control of multi-agent systems using complex
  laplacian.
\newblock \emph{IEEE Transactions on Automatic Control}, 59(7), 1765--1777.

\bibitem[{Lin et~al.(2016)Lin, Wang, Chen, Fu, and Han}]{Lin2016}
Lin, Z., Wang, L., Chen, Z., Fu, M., and Han, Z. (2016).
\newblock Necessary and sufficient graphical conditions for affine formation
  control.
\newblock \emph{IEEE Transactions on Automatic Control}, 61(10), 2877--2891.

\bibitem[{Mukherjee et~al.(2020)Mukherjee, Santilli, Gasparri, and
  Williams}]{Mukherjee2020}
Mukherjee, P., Santilli, M., Gasparri, A., and Williams, R.K. (2020).
\newblock Optimal topology selection for stable coordination of asymmetrically
  interacting multi-robot systems.
\newblock In \emph{2020 IEEE International Conference on Robotics and
  Automation (ICRA)}, 6668--6674.

\bibitem[{Vu et~al.(2024)Vu, Trinh, Tran, and Ahn}]{Trinh2024}
Vu, H.M., Trinh, M.H., Tran, Q.V., and Ahn, H.S. (2024).
\newblock Distance-based formation tracking of single- and double-integrator
  agents.
\newblock \emph{IEEE Transactions on Automatic Control}, 69(2), 1332--1339.

\bibitem[{Xiao et~al.(2022)Xiao, Yang, Zhao, and Fang}]{Xiao2022}
Xiao, F., Yang, Q., Zhao, X., and Fang, H. (2022).
\newblock A framework for optimized topology design and leader selection in
  affine formation control.
\newblock \emph{IEEE Robotics and Automation Letters}, 7(4), 8627--8634.

\bibitem[{Yang et~al.(2019)Yang, Cao, and Anderson}]{Yang2019}
Yang, Q., Cao, M., and Anderson, B.D.O. (2019).
\newblock Growing super stable tensegrity frameworks.
\newblock \emph{IEEE Transactions on Cybernetics}, 49(7), 2524--2535.

\bibitem[{Yu et~al.(2006)Yu, Fidan, Hendrickx, and Anderson}]{Yu2006}
Yu, C., Fidan, B., Hendrickx, J.M., and Anderson, B.D. (2006).
\newblock Merging multiple formations: A meta-formation prospective.
\newblock In \emph{Proceedings of the 45th IEEE Conference on Decision and
  Control}, 4657--4663.

\bibitem[{Zhang(2005)}]{zhang2005}
Zhang, F. (2005).
\newblock \emph{The {Schur} Complement and its Applications}.
\newblock Springer New York, NY, New York,USA.

\bibitem[{Zhao(2018)}]{Zhao2018}
Zhao, S. (2018).
\newblock Affine formation maneuver control of multiagent systems.
\newblock \emph{IEEE Transactions on Automatic Control}, 63(12), 4140--4155.

\end{thebibliography}

\end{document}